 \newtheorem{theorem}{\normalfont\scshape Theorem }
 \newtheorem{proposition}{\normalfont\scshape Proposition}
\newcommand{\TnT}{{\mathbb{T}^n_T}}
\newcommand{\PR}{\mathbb{P}_R}
\newcommand{\QR}{\mathbb{Q}_R}
\newcommand{\grad}{\nabla}
\begin{document}


\title{Finite mechanical proxies for a class of reducible continuum systems 
}

\author{F Cardin and A Lovison }
\address{Dipartimento di Matematica \\
              Universit\`a degli Studi di Padova\\
              Via Trieste, 63 - 35121 Padova (Italy) \\
              Tel.: +39-049-8271438\\
              Fax: +39-049-8271499}
\email{cardin@math.unipd.it}
\email{lovison@math.unipd.it} 

\maketitle

\begin{abstract}

We present the exact finite reduction of a class of   
nonlinearly perturbed wave equations,  based on the  Amann--Conley--Zehnder paradigm.
By solving an inverse eigenvalue problem, 
we establish an equivalence between the spectral finite description derived from A--C--Z and a discrete mechanical  model, a well definite finite spring--mass system. By doing so, we decrypt the abstract information encoded in the finite reduction and obtain a physically sound proxy for the continuous problem. \\
\textsc{Keywords:}
nonlinear wave equation -- exact finite reductions -- inverse eigenvalue problems\\
\textsc{MSC:} 
74B20 
70J50 
65F18 
\\
\textsc{PACS:}
46.40.-f 
02.30.Zz 
\\
\textsc{Subject:}
Mathematical Physics
\end{abstract}


\section{Introduction}

Since the dawn of analytical mechanics, the behavior of continuum materials has been modeled by 
observing a large collections of small particles, in the limit when the number of elements approaches infinity. The archetipal example, the 
equations for the large vibrations of the one dimensional elastic string in the plane,  was first derived by  
Euler in 1744  by considering a chain of $N$ beads connected by springs of length $\ell=L/(N-1)$ and by letting $N\to\infty$ keeping both total mass and length $L$ constant.\footnote{See \cite[Chapter 2.3, page 25]{Antman:1980sh}.}   
Discrete systems converging to continua occur also for other physical problems.
The Sine-Gordon equation is obtained as a thermodynamic limit of a finite discrete system, the Frenkel--Kontorova model, as the number of discrete components goes to infinity \cite{Braun:2004gs,Frenkel:1939li}.
Zabusky and Kruskal  \cite{Zabusky:1965iq} studied the continuum limit of the Fermi-Pasta-Ulam model and obtained the Korteweg-de Vries equation, previously derived \cite{Korteweg:1895um} for describing weakly nonlinear shallow water waves.

Also the applications to continuous problems in scientific calculus are based on an analogous limit of a system of finite elements towards an infinite system.
	The mesh size is chosen according to the desired approximation accuracy, or more often to the available computational resources. 
	
	Of course, there are plenty of successful physical and engineering applications, 
however, one cannot conclude that every solution of a continuum physical problem can be approximated consistently with the solutions of a suitable discretized problem.
	Indeed, as observed in \cite{Antman:1980sh}, even though one can deduce the one dimensional wave equation as the limit of a sequence of discrete ODEs, each one consisting in a chain of beads connected with springs, where the total mass and the length are the same,
	it does not follow that the solutions of the former converge to solutions of the latter in any physically reasonable sense.  
	
	Originally, in a view firstly proposed by Von Neumann \cite{vonNeumann:1944wa}, it seemed that the solutions for the positions of the beads in the ODEs, together with their time derivatives and suitable difference quotients would converge respectively to the position, velocity and strain fields for the PDE. 
	In fact, this convergence is valid where classical smooth solutions for the PDE exist.
	
	On the other hand, where velocity and strain suffer jump discontinuities, acceleration waves or shocks, the solutions of the discrete problem develop high--frequency oscillations that persist in the limit as the number of particles $N$ tend to infinity. Consequently the limiting stress results incorrect.\footnote{Thorough discussion on this issue can be found in \cite{Greenberg:1989qy,Greenberg:1994uq}.}
	
%

	In this paper we actually do not come across this difficulty, in a sense, it is overcome in an inverse direction: we will consider a class of nonlinear wave equations for which we will define a totally equivalent discrete nonlinear system composed by beads and springs.
	We will avoid possible problems with discontinuities in the PDE solutions because we will not rely on thermodynamical limits.
	Actually,  
	because 
	we consider the continuous problem (the PDE) as a starting point, and then we attain an exactly equivalent ODE system, preventing possible approximation errors, as it will be described below in details.  
	
	The approach and the results proposed here rely on the Amann--Conley--Zehnder
reduction (ACZ in what follows), a global Lyapunov--Schmidt technique  \cite{Amann:1981jb,Amann:1980kr,Amann:1974xi,Amann:1980ri,Amann:1979mq,Conley:1983yi,Conley:1976bf} which transforms infinite dimensional variational principles into equivalent finite dimensional functionals. This method has been employed in conjunction to topological techniques, e.g., Conley index, Morse theory, Lusternik--Schnirelmann category and degree theory, for proving results of existence and multiplicity of solutions for nonlinear differential equations, in particular for semilinear Dirichlet problems, Hamiltonian systems and nonlinear wave equations \cite{Amann:1981jb,Amann:1980kr,Amann:1980ri,Amann:1979mq,Conley:1983yi,Bambusi:2001ta,Berkovits:2003xh,Berti:2004wa,Berti:2003dl,Cardin:2008ix,Cappiello:2003qz,Coron:1979km,Coron:1979jf,Degiovanni:2003vz,Llave:2000vy,Lovison:2005yt,Lucia:2003nu,Mancini:1978qh,Nirenberg:1981gn,Rabinowitz:1978mn,Rybicki:2001we,Viterbo:1990wj,Viterbo:1992xv}.

It has been pointed out \cite{Carlo:2007tf}, that the (finite number of) parameters involved in the above exact reduction scheme can be regarded as a sort of {\it collective variables}, 
which represent a rather known and strengthened  way to describe, in molecular dynamics and in other allied fields, complex phenomena occurring on distinct space-time scales inside systems with infinite degrees of freedom. The different scales correspond to the various ways we may watch to the system, for example by observing either local variables which depend only on a few degrees of freedom, or else collective variables which describe the global behavior as a whole. This fruitful point of view has been developed since some years and it is utilized in many branches; we recall, e.g., some references in  statistical mechanics: \cite{Boldrighini:1987ez,Maragliano:2006mf,Yukhnovskiui:1987ft}.

The nonlinear elastic string, when processed with ACZ, produces a nonlinear discrete system composed by (i) a finite set of uncoupled harmonic oscillators, plus (ii) an overall nonlinear coupling term. The finite variables $\mu_k$ can be clearly interpreted as collective variables, but are of spectral nature and completely abstract. Nevertheless, there exists a privileged coordinate transformation which restores a strong physical meaning to the reduced system. Making use of well established results about inverse eigenvalue problems\footnote{See \cite{Arsie:2010tx,Boley:1987vn,Ebenbauer:2008ta,Gladwell:2004yg,Gladwell:2006zr,Ji:1998ys,Nylen:1997kx,Xu:2007ly} and the references therein.}, the linear core of the reduced system, i.e., the set of uncoupled linear oscillators, is transformed in an elastic chain of beads and springs, having precisely the same spectral representation.   

This theoretical excursion is  summarized in  Figure \ref{fig:reduction_scheme}.
By doing so we think to provide a further motivation for the ACZ reduction   coming from the microphysics of the matter. 


\begin{figure}[htbp]
\scalebox{0.8}{
$$
\xymatrix{
*++[F-:<3mm>]+{ \txt{\textbf{ $N$ -- oscillator chain}\\ discrete \\
$L=\frac{1}{2} \sum_{i=1}^N m\, {\dot x_i}^2 - \frac{1}{2}x^\top K x +F(x)$ }} \ar@/^5mm/[rr]^{\txt{thermodynamic\\ limit $N\to\infty$}}
& & *++[F-:<3mm>]+{\txt{\textbf{$\infty$ -- elastic string}\\ continuous \& exact \\
$\mathcal{L}=	\frac{1}{2} \abs{\deinde{}{t}u}^2 - \frac{1}{2} \abs{\nabla_x u}^2 + G(u)$
 }} \ar@/^5mm/[dd]^{\txt{ACZ\\\textit{reduction}}} \\
& \txt{equivalence} \ar[ru] \ar[dl] & \\
 *++[F-:<3mm>]+[F-:<4mm>]+{\txt{\textbf{$\bar{N}$ -- oscillator chain}\\ discrete \& `exact' \\
$\widetilde L=\frac{1}{2} \sum_{i=1}^{\bar{N}} \widetilde{m}_i {\dot x_i}^2 - \frac{1}{2}x^\top \widetilde{K} x +\widetilde{F}(x)$ }}  
 \ar@{<-->}@/^5mm/[uu]^{\txt{physical\\ analogy}}
 &&  *++[F-:<3mm>]+{\txt{ \textbf{$\bar{N}$ -- spectral particles}\\ \textbf{(abstract)}\\ discrete \& `exact'\\
 $ \bar L=\frac{1}{2} \sum_{\abs{k}\leq R} \tonde{{\dot \mu_k}^2 - \abs{k}^2\mu_k^2} +\mathcal{N
}(\mu)$
  }} \ar@/^5mm/[ll]^{\txt{isospectral\\change of coordinates}} \ar@/_5mm/@<3mm>[uu]^{\txt{ACZ\\ \textit{reconstruction}}}
}
$$}
\caption{\small 
A suitable change of variables gives a physically sound interpretation to the ACZ reduction for a class of nonlinear wave equations.
}
\label{fig:reduction_scheme}
\end{figure}
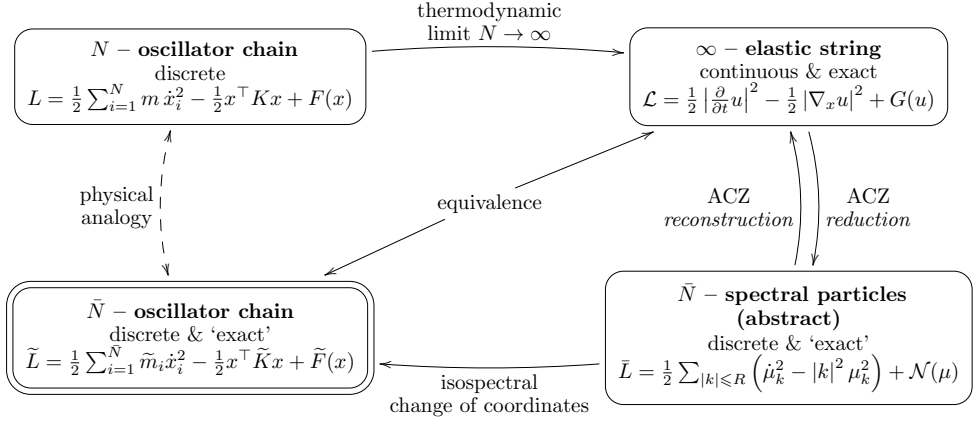


\section{Exact reductions in stationary field theory.  An elastostatics model}
In this section we recall the ACZ reduction for a semilinear elliptic problem, inspired by elastostatics, already described  
in \cite{Cardin:2003rp,Cardin:2007dd,Cardin:2013fk}. The original construction \cite{Amann:1980kr} is reformulated in an Hamiltonian format of 	\cite{Viterbo:1990wj}.
We want to find a
deformation
	$u(x)\in H := H^1_0(\Omega)$, where $x\in\Omega\subseteq \R^n$ is a Stokes
domain (i.e., with a piecewise smooth boundary) satisfying the following
nonlinearly perturbed elliptic equation:
	\begin{equation} \label{eq:DirichletProb}
\begin{cases}
 -\Delta u = F(u), \qquad {\rm in }\ \Omega\\
	u=0 	\qquad\qquad	{\rm on }\ \partial \Omega.
\end{cases}
	\end{equation}
As nonlinear source term $F$ we consider a Nemitski operator
	\begin{eqnarray}
		F:H\To H, \qquad u\mapsto F(u), \\
		F(u)(x) := f\circ u(x),
	\end{eqnarray}
	by a Lipschitz function $f:\R\to\R$, for which
			$$\norm{F(u_1)-F(u_0)}\leq C\norm{u_1-u_0}.$$
	By means of the Amann--Conley--Zehnder reduction, a completely equivalent
\emph{algebraic} equation can be defined:
\begin{equation*}
\begin{cases}
		{\mathcal{E}}({\hat \mu}) = 0, \\
		\hat \mu = (\mu_1,\dots,\mu_\ell) \in \R^\ell.
\end{cases}
\end{equation*}
The ACZ reduction will be now described in detail.
	First of all, a  spectral decomposition of $H^1_0(\Omega)$ w.r.t. $-\Delta$ is
applied,
and the solutions of the Dirichlet problem (\ref{eq:DirichletProb}) can be
represented in the countable dimensional space $\ell^2(\Omega)$,
	$$	H \owns u = u_1 \hat u_1+\dots+u_j \hat u_j + u_{j+1} \hat
u_{j+1}+\dots,	$$
	where the $\hat u_j$ are the eigenvectors of the Laplace operator in $\Omega$:
	$$	-\Delta\hat u_j = \lambda_j \hat u_j, \qquad
0=\lambda_0<\lambda_1\leq\lambda_2\leq\dots,\qquad \forall j=1, 2,\dots.$$
	The spectral decomposition allows to write the inverse operator of the
Laplacian:
	\begin{align*}
	g:H\To H, \qquad \sum u_j \hat u_j \longmapsto \sum \frac{u_j}{\lambda_j} \hat u_j, \\
	  -\Delta \circ g (u) = g\circ (-\Delta) (u) = u.
\end{align*}
	The equation (\ref{eq:DirichletProb}) can be rewritten setting $u= g(v)$,
	$$ v = F(g(v)), \qquad v\in H, \qquad v=\sum_{j=1}^\infty v_j \hat u_j, $$
	and a splitting into a finite dimensional \emph{core} and an infinite
dimensional \emph{tail} can be performed for every cutoff $R\in \N$, applying the
projection operators derived from the spectral decomposition:
	\begin{align}
		v & = v_1\hat u_1 + v_2 \hat u_2 + \dots = \\
		 & =  \PR v + \QR v = \mu + \eta = \\
		 &:= \underbrace{\mu_1\hat u_1+\dots + \mu_R \hat u_R}_{\textstyle{\mu}}+
		\underbrace{\eta_{R+1} \hat u_{R+1} + \dots }_{\textstyle{\eta}} 
	\end{align}
	\begin{equation}
\begin{cases}
			\mu = \PR F(g(\mu+\eta)) ,\qquad {\rm finite \ core}\\
			\eta = \QR F(g (\mu+\eta)),\qquad {\rm infinite\ tail} 
\end{cases}
	\end{equation}
	The key point now is to show that for sufficiently large values of $R$ the
infinite part of the equation is always uniquely solved for every fixed finite
part $\mu$. This is proved if we can show that the operator
	$$ \eta\mapsto \QR F(g (\mu+\eta)) $$
	is contractive for $R$ sufficiently large. Indeed,
	\begin{eqnarray}
	 \norm{\QR F(g (\mu+\eta_1)) - \QR F(g (\mu+\eta_2))} \leq
	 	\norm{F(g (\mu+\eta_1)) -  F(g (\mu+\eta_2))}\leq \\
		\leq C  \norm{g (\mu+\eta_1) - g (\mu+\eta_2)} \leq \frac{C}{\lambda_{R+1}}
		\norm{\eta_1- \eta_2}.
	\end{eqnarray}
	We have $\frac{C}{\lambda_{R+1}} < 1$ for $R$ suitably large because $\lambda_N
\to \infty$.

	Thus we can denote by $\bar \eta (\mu)$ the unique fixed point of the previous
contraction, and we can (at least formally) substitute in the finite equation,
which definitively does represent the very equation of our problem, the determination of $\mu = (\mu_1, \dots, \mu_R)$:
	\begin{equation}\label{eq:ReducedDirichlet}
		\mu = \PR F( g(\mu+\bar \eta (\mu))), 
		(\mu_1,\dots,\mu_R) \in \R^R.
	\end{equation}

	\subsection{Application of the reduction to the variational principle in the
static case}
	By applying the Volterra--Vainberg Theorem \cite{Ambrosetti:1992wm,Vainberg:1964ce,Volterra:1913ce}
		one can associate to the original Dirichlet problem (\ref{eq:DirichletProb}) a
variational principle, an Euler--Lagrange functional:
\begin{equation}
		J[u(\cdot)] := \int_{t=0}^{t=1}
		\contrazione{-\Delta(t u) - F(t u), u} dt=
		 \int_{\Omega}\Big[\frac{1}{2}  |\nabla u|^2-\int^u_{0}  f(s)ds \Big]dx
	\end{equation}
which critical points $dJ[u]=0$ are the (weak) solutions of
(\ref{eq:DirichletProb}).

	The reduction applies as well to the variational principle, simply substituting
$u$ by $g(\mu+\bar\eta(\mu))$, obtaining a finite dimensional functional:
	\begin{equation}
		{\widetilde J}:\R^R\To \R, \qquad {\widetilde J}(\mu) := J[g (\mu+\bar\eta(\mu))].
	\end{equation}
	Indeed one can show this straightforwardly obtained functional is the energy
functional of the finite dimensional equation (\ref{eq:ReducedDirichlet}), in
the sense that the respective solution sets coincide:
	\begin{equation}
		d{\widetilde J}(\mu)=0\ \   \iff \ \ \mu = \PR F( g(\mu+\bar \eta (\mu))).
	\end{equation}
	As already observed in the introduction, this construction lends itself naturally to topological methods, which can be exploited very deeply in finite dimensional spaces 
	for obtaining existence and multiplicity of solutions results.

\section{From statics to dynamics}

In this section we apply the ACZ reduction for a class of nonlinear wave equations on the $n$--dimensional torus. Many results exist in literature for the nonlinear vibrating string 
(\cite{Amann:1981jb,Amann:1980ri,Berkovits:2003xh,Berti:2004wa,Berti:2003dl,Rybicki:2001we,Wang:1989zi} among others), but only few deal with the higher dimensional case \cite{Berti:2010pt,Bourgain:1995iw} using approaches analogous to ACZ. 

\subsection{A Perturbed Wave Equation}\label{w}
	Let us consider the following class of nonlinear wave equations:
we want to find a $T$--periodic motion $u(t,x)\in  H^1_0([0,T]\times\mathbb{T}^n)$
($\ \mathbb{T}^n$ is the $n$--dimensional torus) such that
			\begin{equation}\label{eq:WaveEq}
\begin{cases}
 \square u (t,x) =  F(u), \qquad (t,x) \in [0,T]\times \T^n,\\
				u(0,x) = u(T,x) = 0,	\qquad x \in  \T^n,
\end{cases}
\end{equation}
where $\square:= 	\deindesecuno{}{t} - \Delta_x $ is the (D'Alembertian) wave
operator, while $F$ is a Nemitski operator satisfying the Lipschitz condition
			$$\norm{F(u_1)-F(u_0)} < C\norm{u_1-u_0},$$
as in the static case of Section 1.

	We will denote by $\T^n_T:= [0,T]\times\T^n$ and by $H$ the Hilbert space
\begin{equation}\label{H}
H^1_0(\T^n_T,\R)=\set{u\in\T^n_T\taleche u(0,x)=u(T,x)=0,\forall x\in\T^n}
\end{equation}
endowed with the scalar product
	\begin{align}
	{\contrazione{u,v}}_H & := \contrazione{u,v}_{L^2}
	+ \contrazione{\partial_t u, \partial_t v}_{L^2}
	+ \contrazione{\grad_x u, \grad_x v}_{L^2} \cr
	& =\frac{1}{(2\pi)^n T} \int_{\T^n_T}
		\tonde{uv+\dot u\dot v+ \grad_x u \cdot\grad_x v}
		dt dx^n,
	\end{align}
and the norm
	\begin{equation}
		\norm{u}^2_H := \frac{1}{(2\pi)^n T} \int_{\T^n_T}
		\abs{u}^2 + \abs{\partial_t u}^2	+ \abs{\grad u}^2
		dt d^n x.
	\end{equation}
The environment  $H=H^1_0$ requires a distributional extension of problem
(\ref{eq:WaveEq}).
\begin{proposition}
	The problem
	\begin{equation}\label{eq:WeakWaveEq}
		\frac{1}{(2\pi)^n T} \int_0^T
		\int_{\T^n}\quadre{
		\deinde{}{t}u
		\deinde{}{t}h - \grad_x u \cdot \grad_x h - F(u)h
	}d^n x dt =0, \qquad \forall h\in H
	\end{equation}
	extends to $H$ the wave equation (\ref{eq:WaveEq}).
\end{proposition}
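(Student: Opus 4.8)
The plan is to establish two things: first, that each of the three integrals in (\ref{eq:WeakWaveEq}) is well defined for every pair $u,h\in H=H^1_0(\TnT,\R)$, so that the weak problem is genuinely posed on all of $H$; and second, that for sufficiently smooth $u$ the weak formulation (\ref{eq:WeakWaveEq}) is equivalent to the classical problem (\ref{eq:WaveEq}). Together these exhibit (\ref{eq:WeakWaveEq}) as a bona fide extension of the wave equation to the Sobolev environment $H$. For the well-definedness I would observe that the quadratic terms $\int \partial_t u\,\partial_t h$ and $\int \grad_x u\cdot\grad_x h$ are two of the three pieces of the $H$ inner product, hence bounded by $\norm{u}_H\norm{h}_H$, while the source pairing $\int F(u)\,h$ is finite by Cauchy--Schwarz, since the Lipschitz Nemitski operator sends $u\in H$ into $L^2(\TnT)$ and $h\in H\subset L^2(\TnT)$.

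For the forward implication I would multiply the strong equation $\square u-F(u)=0$ by an arbitrary $h\in H$ and integrate over $\TnT$. Integration by parts in the time variable produces the boundary contribution $\left[\partial_t u\,h\right]_{t=0}^{t=T}$, which vanishes because every $h\in H$ satisfies $h(0,x)=h(T,x)=0$ by the defining condition (\ref{H}) of $H^1_0$ in time. Green's identity in the spatial variables produces no boundary term at all, since $\T^n$ is a compact manifold without boundary (equivalently, by periodicity). What survives is exactly the bilinear identity (\ref{eq:WeakWaveEq}), now required to hold against all test functions $h$.

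For the converse I would run the same integrations by parts backwards: starting from (\ref{eq:WeakWaveEq}) for a twice continuously differentiable $u$, the identical two boundary terms vanish for the identical two reasons, and one is left with $\int_{\TnT}\bigl(\square u-F(u)\bigr)h\,d^nx\,dt=0$ for every $h\in H$. Because $H$ contains all smooth functions that are periodic in $x$ and vanish at $t=0,T$, and these are dense in $L^2(\TnT)$, the fundamental lemma of the calculus of variations forces $\square u-F(u)=0$ pointwise; the homogeneous conditions $u(0,x)=u(T,x)=0$ are already built into the membership $u\in H$. This closes the equivalence.

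The only genuinely delicate point is the regularity gap in this last step: reversing the integrations by parts and invoking the fundamental lemma recovers the pointwise equation only when $u$ is twice differentiable, so the two formulations coincide in the strict sense precisely for classical solutions, whereas for general $u\in H$ the weak identity (\ref{eq:WeakWaveEq}) is itself the operative definition. I expect the crux to be the careful bookkeeping of the vanishing boundary terms---temporal endpoints annihilated by the $H^1_0$-in-time condition on $h$, spatial terms annihilated by the absence of boundary on $\T^n$---rather than any quantitative estimate; the Lipschitz hypothesis on $F$ enters only to guarantee that the source pairing is finite and continuous on $H$, which is what permits the extension to be posed on the whole space.
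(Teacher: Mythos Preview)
Your argument is correct and is the standard derivation of a weak formulation: well-definedness on $H$ via the built-in $H^1$ pairings and the Lipschitz bound on $F$, then integration by parts in $t$ (boundary terms killed by $h(0,\cdot)=h(T,\cdot)=0$) and in $x$ (no boundary on $\T^n$), with the fundamental lemma for the converse on smooth $u$. The paper itself states this proposition without proof, so there is nothing to compare against; your write-up supplies exactly the routine verification the authors omitted.
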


	We will see it is possible to perform on the $x$ spatial coordinates an
analogous \emph{from--infinite--to--finite} dimensional reduction of the wave
equation  (\ref{eq:WaveEq}).

More precisely, we will prove the following theorem.
\begin{theorem}[Reduction of the wave equation]\label{teo:redWave}
	For every Nemitski operator $F:H\to H$ with Lipschitz constant $C>0$ there
exists $\N\owns R=R(C,T,n)>0$ and a family of functions
$\phi_k:\R^\ell\times[0,T]\to\R$ such that the partial differential equation
(\ref{eq:WaveEq}) is equivalent to a nonlinear system of ordinary differential
equations:
	\begin{equation}\label{eq:WaveRed}
\begin{cases}
 \ddot{ \mu}_k(t) + \abs{k}^2\mu_k(t) = \phi_k(\mu,t), \\
	\mu_k(0) = \mu_k(T) = 0,  \\
	\text{ for } k\in\Z^n,\ \abs{k}\leq R.
\end{cases}
\end{equation}
To every (strong) solution $u\in H$ of problem (\ref{eq:WaveEq}), there exists a
(strong) solution $(\mu_1(\cdot),\dots,\mu_\ell(\cdot))$ of (\ref{eq:WaveRed})
and viceversa.
The dimension $\ell$ of the equivalent reduced system is given by
$$\ell:= \#\set{k\in\Z^n :\ \abs{k}\leq R}.$$
\end{theorem}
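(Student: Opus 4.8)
The plan is to transplant the global Lyapunov--Schmidt scheme of the previous section, but to perform the spectral splitting \emph{only} in the spatial variables while keeping the time variable continuous. First I would diagonalize $-\Delta_x$ on $\T^n$, whose eigenfunctions $\hat e_k$ ($k\in\Z^n$) satisfy $-\Delta_x\hat e_k=\abs{k}^2\hat e_k$, and expand a candidate solution as $u(t,x)=\sum_{k\in\Z^n}u_k(t)\,\hat e_k(x)$ with $u_k(0)=u_k(T)=0$. Projecting the weak equation onto each $\hat e_k$ turns $\square u=F(u)$ into the countable system of scalar boundary value problems $\ddot u_k+\abs{k}^2 u_k=F_k(u)$, where $F_k(u)(t)$ denotes the $k$-th spatial Fourier coefficient of $F(u)(t,\cdot)$. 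This is the exact dynamical analogue of the substitution $v=F(g(v))$ used in the elliptic case, except that here the time direction is left untouched.

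Next I would introduce the spatial core/tail decomposition through the projectors $\PR$ (onto the modes $\abs{k}\le R$) and $\QR$ (onto $\abs{k}>R$), writing $u=\mu+\eta$. The finite block $\mu=(u_k)_{\abs{k}\le R}$ is kept as the unknown of the reduced system, while the infinite block $\eta=(u_k)_{\abs{k}>R}$ must be slaved to $\mu$. Since, unlike $-\Delta$, the operator $\square$ is not globally invertible on $H$, I would not set $u=g(v)$ globally but only invert $\square$ on the tail: for each $\abs{k}>R$ define the Green operator $G_k$ inverting $\partial_t^2+\abs{k}^2$ under Dirichlet conditions, assemble $g:=\bigoplus_{\abs{k}>R}G_k$ on the tail subspace, and seek $\eta$ as the fixed point of $\eta\mapsto g\,\QR F(\mu+\eta)$. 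Provided this map is a contraction, the Banach fixed point theorem yields a unique $\bar\eta(\mu)$, Lipschitz in $\mu$; substituting it back defines $\phi_k(\mu,t):=F_k(\mu+\bar\eta(\mu))(t)$ and produces exactly the reduced system (\ref{eq:WaveRed}).

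The main obstacle, and the genuine point of departure from the elliptic model, lies in the contraction estimate. Mirroring the computation of Section~2, the Lipschitz bound on $F$ reduces everything to controlling the operator norm $\norm{g}$ on the tail, which in the basis $\sin(j\pi t/T)\,\hat e_k$ equals $\sup_{j\ge 1,\ \abs{k}>R}\abs{\abs{k}^2-(j\pi/T)^2}^{-1}$. In the static problem the corresponding denominators were the Laplace eigenvalues $\lambda_j\to+\infty$, so the tail norm was automatically of order $1/\lambda_{R+1}$; here, by contrast, the spectrum of $\square$ consists of the numbers $\abs{k}^2-(j\pi/T)^2$, which are neither positive nor bounded away from zero, the hallmark small-divisor difficulty of the wave operator. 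I therefore expect the hard part to be proving that $\inf_{j\ge 1,\ \abs{k}>R}\abs{\abs{k}^2-(j\pi/T)^2}$ can be forced above $C$ by taking $R$ large: this requires a non-resonance (Diophantine) restriction relating $T$ to the spatial spectrum $\set{\abs{k}^2}$, and it is precisely what explains the dependence $R=R(C,T,n)$. One must verify that, for admissible $T$, only finitely many spatial modes are near-resonant, so that a single cutoff $R$ pushes the tail into the region where $\partial_t^2+\abs{k}^2$ is uniformly invertible with small inverse.

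Finally, granting the contraction, I would close the equivalence of solution sets exactly as in the static case. A strong solution $u\in H$ of (\ref{eq:WaveEq}) projects to $\mu=\PR u$, whose complementary part necessarily coincides with $\bar\eta(\mu)$ by uniqueness of the fixed point, so $\mu$ solves (\ref{eq:WaveRed}); conversely, any solution $\mu$ of (\ref{eq:WaveRed}) reconstructs $u:=\mu+\bar\eta(\mu)$, which satisfies the weak formulation by construction and, through the Proposition extending (\ref{eq:WaveEq}) to $H$, is a strong solution. The dimension count $\ell=\#\set{k\in\Z^n:\abs{k}\le R}$ is then simply the cardinality of the retained spatial modes.
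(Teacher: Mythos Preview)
Your route is essentially the paper's own: a spatial Lyapunov--Schmidt splitting $u=\mu+\eta$ along the eigenbasis of $-\Delta_x$, a fixed-point argument on the tail map $\eta\mapsto \QR g\,F(\mu+\eta)$, and reconstruction of the reduced ODE system as the bifurcation equation. The only structural difference is cosmetic: the paper first inverts $\square$ globally on $H$ (it simply declares that ``conjugation phenomena are avoided by choosing $T\notin\N\pi$'' so that $g=\square^{-1}$ exists) and writes $u=g(F(u))$ \emph{before} projecting, whereas you project first and only invert $\partial_t^2+\abs{k}^2$ on the high modes. Since $g$ and $\QR$ commute, the two contraction maps coincide and the estimates are identical.

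Where you go further than the paper is in isolating the small-divisor issue. The paper's proof is a four-line sketch that asserts the contraction in Step~3 without computing the tail norm, while you correctly identify that $\norm{\QR g}$ is governed by $\sup_{j\ge1,\ \abs{k}>R}\bigl|\abs{k}^2-(j\pi/T)^2\bigr|^{-1}$ and that, unlike the elliptic case, this does not automatically decay as $R\to\infty$. Your expectation that a Diophantine-type restriction on $T$ is needed to make the cutoff work is well founded; the paper's condition $T\notin\N\pi$ only prevents exact resonances and does not by itself bound the inverse, so your analysis is in fact more careful than the sketch it is meant to reproduce.
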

\begin{proof} [Proof of Theorem \ref{teo:redWave}]
The proof will be split into the following steps, starting from
\begin{equation}\label{eq:startingpoint}
	\deindesecuno{u}{t} - \Delta_x u =  F(u).
\end{equation}
\begin{enumerate}
	\item{	Let $g:H\to H$ be the inverse of the D'Alembert operator 
$\Box u= \deindesecuno{u}{t} - \Delta_x u $, i.e.,  $\square\tonde{ g(v)} = g \tonde{ \square (v)} = v  $. Conjugation phenomena are avoided by choosing $T\not\in \N\pi$. Applying $g$ to both sides of (\ref{eq:startingpoint})	we get $\Box
u = F(u) \quad \Longrightarrow  \quad u = g(F(u))$.}

	\item	{
	By splitting $H:=H^1_0([0,1]\times\T^n)$ into a `finite' and an `infinite'
dimensional subspaces,
		$H = \PR H \oplus \QR H$,
		we get $ \quad \Longrightarrow  \quad   \mu+\eta = g(F(\mu+\eta))$,
		with 	$\mu\in\PR H$ and $\eta\in\QR H$.
			}
	\item	{Applying the Lipschitz property of the Nonlinear perturbation $F:H\to
H$, we get $ \quad \Longrightarrow   \quad  \eta\mapsto \QR g(F(\mu+\eta))$	is a
\emph{contraction} $\forall \mu\in \PR H $ fixed.}
	\item {By substituting the fixed point $\tilde\eta(\mu)$	 into the above
`finite' part
	we get
	 the so-called \emph{bifurcation equation}
	 $ \mu =\PR g(F(\mu+\tilde\eta(\mu)))$
	 and then the reduced system.}
	\end{enumerate}

\end{proof}

\section{Reduction and variational principles}

In this section we will show how the finite dimensional reduction also  extends
to a variational formulation of  (\ref{eq:WaveEq}). Indeed, it is possible to
substitute the formula for $\tilde u(\mu)$ in the variational principle and
exhibit a bijection between the critical points  of the finite and the infinite
dimensional functionals.

Moreover, an accurate analysis of the finite functional will reveal more
explicitly the relation between the nonlinear continuum and the equivalent
finite particles system.

\subsection{Variational Formulation of the Wave Equation}
We will now exhibit the variational principle associated to the wave equation.
We denote by $G$ a primitive of $F$, i.e., $G(s):=\int_0^s F(\tau) d\tau$, the
we write the Euler--Lagrange Functional:
\begin{align}
		J:  H  \To \R, \cr
		\phantom{J: } u \longmapsto J[u]:=\frac{1}{(2\pi)^n T} \int_{ [0,T] \times \T^n}
		\quadre{
			\frac{1}{2} \abs{\deinde{}{t}u}^2 - \frac{1}{2} \abs{\grad_x u}^2 + G(u)
		} dt d^n x.
\end{align}
\begin{theorem}\label{teo:fullVariat}
		Every critical point of $J$ is a (weak) solution of (\ref{eq:WaveEq}) and vice
versa.
		\begin{equation}\label{eq:VariationalEquivalence}
			dJ[u]\cdot h = 0, \qquad \forall h\in H, \qquad \Longleftrightarrow \qquad
\begin{cases}
 \Box u = F(u),\cr
				u(0,x) = u(T,x) = 0 \\
						\textrm{ weakly }
	
\end{cases}
		\end{equation}
		where \emph{weakly} means in the sense of (\ref{eq:WeakWaveEq}).
\end{theorem}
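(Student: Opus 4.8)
The plan is to prove Theorem \ref{teo:fullVariat} by a direct computation of the first variation of $J$, showing that its vanishing against every test function reproduces, term by term, the weak formulation (\ref{eq:WeakWaveEq}). Since the asserted equivalence will then be read off from a single identity, both implications follow at once. First I would compute the G\^ateaux derivative
\[
dJ[u]\cdot h = \left.\frac{d}{d\epsilon}\right|_{\epsilon=0} J[u+\epsilon h], \qquad h\in H.
\]
The two quadratic terms $\tfrac{1}{2}|\partial_t u|^2$ and $-\tfrac{1}{2}|\nabla_x u|^2$ are smooth on $H$ and differentiate immediately, producing $\partial_t u\,\partial_t h$ and $-\nabla_x u\cdot\nabla_x h$ under the integral sign. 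The only delicate term is the potential $\int G(u)$.

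For that term I would invoke the Volterra--Vainberg theorem already used in the static case: since $F$ is a Lipschitz Nemitski operator and $G'=F$ by construction, the map $u\mapsto\int G(u)$ is continuously differentiable on $H$, with derivative $h\mapsto \int F(u)\,h$. Collecting the three contributions gives
\[
dJ[u]\cdot h = \frac{1}{(2\pi)^n T}\int_{[0,T]\times\T^n}\left[\partial_t u\,\partial_t h - \nabla_x u\cdot\nabla_x h + F(u)\,h\right]dt\,d^n x,
\]
which, matching the notation of (\ref{eq:WeakWaveEq}), is precisely its left-hand side. Hence $dJ[u]\cdot h=0$ for all $h\in H$ holds if and only if $u$ satisfies (\ref{eq:WeakWaveEq}), and by the Proposition the latter is exactly the distributional extension of $\Box u = F(u)$ together with the boundary conditions $u(0,x)=u(T,x)=0$. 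This one identity delivers both directions of the equivalence (\ref{eq:VariationalEquivalence}) simultaneously, so no separate argument for the converse is needed.

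The main obstacle is the rigorous justification of differentiating the nonlinear term and passing the difference quotient to the limit inside the integral; this is exactly where the Lipschitz bound on $F$ is essential, since it supplies the domination required for a dominated-convergence argument (equivalently, it is the hypothesis under which the Volterra--Vainberg potential theory applies). A secondary point worth recording is that $J$ is well defined and finite on all of $H$, the quadratic part being controlled by the $H$-norm and the potential part by the linear growth of $G$ implied by the Lipschitz condition. I would also emphasize that \emph{no} integration by parts is actually required to establish the equivalence, because (\ref{eq:WeakWaveEq}) is already stated in the same integrated-by-parts form as $dJ$; integration by parts enters only if one wishes to recover the strong form $\Box u = F(u)$, and there the boundary terms in $t$ vanish precisely because the constraint $h(0,x)=h(T,x)=0$ is built into $H$, while those in $x$ disappear by periodicity on $\T^n$.
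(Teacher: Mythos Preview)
Your argument is correct and is the canonical one. The paper in fact states Theorem~\ref{teo:fullVariat} without proof, so there is nothing to compare against beyond noting that your direct computation of the G\^ateaux derivative is exactly the routine verification the authors omit. One small caution: the expression you obtain for $dJ[u]\cdot h$ carries $+F(u)\,h$, whereas the paper's formula~(\ref{eq:WeakWaveEq}) is printed with $-F(u)\,h$; your sign is the one consistent with the weak form of $\Box u=F(u)$ after integration by parts, so the discrepancy is a typographical slip in the paper rather than an error on your part.
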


\subsection{Reduction applied to the variational principle}
	We will show that substituting $u\to \tilde u(\mu):=\mu+\tilde\eta(\mu)$ in
$J[u]$, we obtain a \emph{reduced} variational principle, which critical points
will correspond one to one to the (weak) solutions
	of the reduced equation (\ref{eq:WaveRed}).
	\begin{align}\label{eq:redVariatPrin}
		I : H^1_0([0,1];\R^m) \To \R, \\
		\qquad \mu \longmapsto  I[\mu] := J[\tilde u(\mu)]= J[\mu+\tilde\eta(\mu)].
	\end{align}
Theorem \ref{teo:fullVariat} states that the critical points of the finite
functional correspond one to one to the weak solutions of the wave equation
(\ref{eq:WaveEq}).

\begin{proposition}
		Every critical point of $J[u]$ corresponds to a critical point of $I[\mu]$ and
vice versa.
	i.e.,
	\begin{eqnarray}
		dI[\mu]=0 &\Longrightarrow dJ[ u]=0, \qquad  u= \mu+\tilde\eta(\mu), \\
		 dJ[u]=0 &\Longrightarrow dI[\mu]=0 , \qquad \mu=\PR u.
	\end{eqnarray}
\end{proposition}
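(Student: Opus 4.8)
The plan is to reduce the statement to the classical Lyapunov--Schmidt fact that the fixed point $\tilde\eta(\mu)$ is nothing but the critical point of $J$ on the affine slice $\mu+\QR H$, so that when one differentiates $I[\mu]=J[\tilde u(\mu)]$ with $\tilde u(\mu)=\mu+\tilde\eta(\mu)$, the variation along $\tilde\eta$ drops out and only the finite ``core'' direction survives. First I would identify the $L^2$--gradient of $J$: integrating by parts in $J[u]=\int[\tfrac12|\partial_t u|^2-\tfrac12|\grad_x u|^2+G(u)]$ and using the boundary conditions $u(0,\cdot)=u(T,\cdot)=0$ together with periodicity in $x$ gives
\begin{equation*}
	dJ[u]\cdot h = \langle F(u)-\square u,\, h\rangle_{L^2}, \qquad \forall h\in H,
\end{equation*}
so that $\nabla J[u]=F(u)-\square u$ and $dJ[u]=0\iff\square u=F(u)$ weakly; this is exactly the content of Theorem \ref{teo:fullVariat}.

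The second, decisive step is to recognise the tail equation as the vanishing of the $\QR$--component of $\nabla J$. Since $g=\square^{-1}$ is a spectral function of $\square$, it commutes with the orthogonal projections $\PR,\QR$ and is injective; applying $g$ to the $\QR$--part of the Euler--Lagrange equation yields
\begin{equation*}
	\QR\nabla J[u]=0 \iff \QR\bigl(F(u)-\square u\bigr)=0 \iff \QR u = \QR g(F(u)) \iff \eta=\QR g(F(\mu+\eta)).
\end{equation*}
The last equation is precisely the contraction whose unique fixed point is $\tilde\eta(\mu)$. Hence $\tilde\eta(\mu)$ is characterised as the unique critical point of the partial functional $\eta\mapsto J[\mu+\eta]$ restricted to the infinite slice $\QR H$, i.e.\ $\QR\nabla J[\tilde u(\mu)]=0$ identically in $\mu$.

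With this in hand the differentiation is immediate. Writing $u=\tilde u(\mu)$ and using the chain rule, for any core variation $\delta\mu\in\PR H$,
\begin{equation*}
	dI[\mu]\cdot\delta\mu = dJ[u]\cdot\bigl(\delta\mu + d\tilde\eta(\mu)\,\delta\mu\bigr) = \langle \PR\nabla J[u],\,\delta\mu\rangle_{L^2} + \langle \QR\nabla J[u],\, d\tilde\eta(\mu)\,\delta\mu\rangle_{L^2},
\end{equation*}
where I have used that $\delta\mu\in\PR H$, that $d\tilde\eta(\mu)\,\delta\mu\in\QR H$ (the fixed point stays in $\QR H$ for every $\mu$), and the $L^2$--orthogonality of the splitting to route each factor onto the matching component of $\nabla J[u]$. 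By Step~2 the second term vanishes, leaving the clean identity $dI[\mu]\cdot\delta\mu=\langle\PR\nabla J[u],\delta\mu\rangle_{L^2}$. Both implications now follow: if $dJ[u]=0$ then $\nabla J[u]=0$, so in particular $\QR\nabla J[u]=0$ forces $\eta=\tilde\eta(\PR u)$ and the core part gives $dI[\mu]=0$ with $\mu=\PR u$; conversely, if $dI[\mu]=0$ then $\PR\nabla J[u]=0$, while $\QR\nabla J[u]=0$ holds automatically because $u=\mu+\tilde\eta(\mu)$, whence $\nabla J[u]=0$ and $dJ[u]=0$.

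The hard part will not be the algebra but the regularity bookkeeping. The functional $I$ is only as smooth as $\mu\mapsto\tilde\eta(\mu)$, so one must justify that the fixed point depends differentiably on the parameter before the chain rule above is even meaningful. This is the implicit function theorem applied to $\eta-\QR g(F(\mu+\eta))=0$: the uniform contraction estimate $C/\lambda_{R+1}<1$ keeps the linearised operator $\unone-\QR g\,DF(\mu+\eta)$ boundedly invertible, which gives $\tilde\eta\in C^1$ provided one strengthens the hypothesis from $F$ merely Lipschitz to $F\in C^1$. The remaining care is in tracking the self-adjointness of $\PR,\QR$ and their commutation with $g$ when passing between the $L^2$ gradient used above and the $H$--inner product appearing in the definition of $J$; since the eigenfunctions of $\square$ are orthogonal in both, the projections act compatibly and no inconsistency arises.
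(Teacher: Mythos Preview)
The paper states this proposition without proof, so there is nothing to compare against directly; your argument is the standard Lyapunov--Schmidt variational computation and is essentially correct. The key identity you derive,
\[
dI[\mu]\cdot\delta\mu=\langle \PR\nabla J[\tilde u(\mu)],\,\delta\mu\rangle_{L^2},
\]
is exactly what makes the correspondence work, and your derivation of it---via the observation that the fixed-point equation $\eta=\QR g(F(\mu+\eta))$ is equivalent to $\QR\nabla J[\mu+\eta]=0$, so that the $\QR$-component of the chain-rule term vanishes identically along the graph of $\tilde\eta$---is the right one.

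The regularity caveat you flag is real and is in fact the only substantive obstacle: under the paper's standing hypothesis that $F$ is merely Lipschitz, neither $J$ nor $\tilde\eta$ is guaranteed to be $C^1$, so the chain-rule line is formal. Your proposed fix (upgrade to $F\in C^1$ and invoke the implicit function theorem, with invertibility of $\unone-\QR g\,DF$ supplied by the same contraction bound $C/\lambda_{R+1}<1$) is the standard remedy. Alternatively, one can bypass differentiability of $\tilde\eta$ altogether by arguing at the level of critical points: if $dJ[u]=0$ then in particular $\QR\nabla J[u]=0$, which forces $\QR u=\tilde\eta(\PR u)$ by uniqueness of the fixed point, and then $\PR\nabla J[u]=0$ reads as the reduced equation for $\mu=\PR u$; conversely any solution of the reduced equation, together with its tail $\tilde\eta(\mu)$, satisfies both projected components of $\nabla J=0$. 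This variant needs only Lipschitz $F$ and matches the level of generality the paper actually assumes.
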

%
%
%
\section{From continua to discrete. Physical interpretation}
The aim of this section consists in rewriting the reduced energy functional
(\ref{eq:redVariatPrin}) as the Hamilton variational principle corresponding to
a certain discrete Lagrangian system, i.e.,
$$I[\mu] = \frac{1}{T} \int_0^T \mathcal{L}(\mu(t),\dot\mu(t)) dt. $$
 The Lagrangian system associated with $\mathcal{L} (\mu,\dot\mu)$ will be
discussed  and physically interpreted.

 Let us start writing more explicitly the reduced functional $I(\mu)$, setting $\TnT=[0,T]\times \T^n$, 
	\begin{align} \label{eq:redLagrUno}
 I[\mu] := J[\mu+\tilde\eta(\mu)] 
		= \frac{1}{(2\pi)^n T}  \int_\TnT \quadre{
		\frac{1}{2} \abs{\deinde{\mu}{t} }^2 - \frac{1}{2} \abs{\grad_x \mu}^2 +
		G(\mu + \tilde\eta(\mu))
		} d^n x dt + \\
		+ \frac{1}{(2\pi)^n T}  \int_\TnT \quadre{
		\frac{1}{2} \abs{\deinde{\tilde\eta(\mu)}{t}}^2 -  \frac{1}{2}\abs{ \grad_x
\tilde\eta(\mu)}^2
		 } d^n x dt,
	\end{align}
where the mixed terms in the expansion of the squared terms have vanished after
integration, being $\mu$ and $\tilde\eta(\mu)$ orthogonal, as well as their
derivatives.

 Let us now consider a weak formulation of the fixed point:
 	\begin{equation*}
		 \frac{1}{(2\pi)^n T}
		\int_\TnT \quadre{
			\deinde{}{t} \tilde\eta(\mu) \deinde{}{t} h - \grad_x\tilde\eta(\mu) \cdot
\grad_x h + \QR F(\mu+ \tilde\eta(\mu)) \cdot h
		} d^nx dt = 0, \quad \forall h\in\QR H.
	\end{equation*}
Note that we can write $F$ in place of $\QR F$, being $h$ orthogonal to $\PR F$.
If we substitute $\tilde\eta(\mu)$ for $h$, we get:
\begin{equation}
	\int_\TnT \quadre{
			\abs{\deinde{}{t} \tilde\eta(\mu)}^2 \!\!\!
			 - \abs{\grad_x\tilde\eta(\mu)}^2
			} d^nx dt
			= - \int_\TnT\!\!\!\!\! F(\mu+ \tilde\eta(\mu)) \cdot \tilde\eta(\mu)
		 d^nx dt, \quad \forall h\in\QR H.
	\end{equation}
Substituting into the reduced functional (\ref{eq:redLagrUno}),  we get
	\begin{equation} 
		I[\mu] = \frac{1}{(2\pi)^n T}
		\int_\TnT \Bigl[\frac{1}{2} \abs{\deinde{\mu}{t} }^2 - \frac{1}{2}
\abs{\grad_x \mu}^2 +  G(\mu + \tilde\eta(\mu)) 
-\ \frac{1}{2}
F(\mu+\tilde\eta(\mu))\tilde\eta(\mu) \Bigr] d^n x dt.
	\end{equation}
Writing the Fourier expansion of $\mu$ and integrating out on $\T^n$ the first
two terms, we obtain
\begin{align}\label{N}
	I[\mu] & = \frac{1}{T}
		\int_0^T \sum_{\abs{k}\leq R}
		\tonde{ \frac{1}{2} \abs{\dot\mu_k (t) }^2 - \frac{\abs{k}^2}{2} \abs{
\mu_k(t)}^2 } dt + \\
		&+  \frac{1}{T}
		\int_0^T  \underbrace{\frac{1}{(2\pi)^n}
		\quadre{ \int_{\T^n} G(\mu + \tilde\eta(\mu)) - \frac{1}{2}
F(\mu+\tilde\eta(\mu))\tilde\eta(\mu) d^n x }}_{=: N[\mu(t)]} dt.
\end{align}
As a result, the Lagrangian function corresponding to the variational principle
(\ref{eq:redVariatPrin}) can be written as:
	\begin{equation}\label{eq:spectral_variables}
		\mathcal{L}(\mu,\dot\mu) :=
	\sum_{\abs{k}\leq R}
		\underbrace{\tonde{
		\frac{1}{2}\abs{	\dot\mu_k}^2  - \frac{\abs{k}^2}{2}\abs{ \mu_k}^2
		}}
		_{\substack{{\rm system\ of}\  \ell \cr {\rm harmonic\ oscillators}}}
		+ \underbrace{\mathcal{N}[\mu].}_{{\rm nonlinear\ coupling}}
	\end{equation}
Thus, the finite dimensional reduced functional can be interpreted as a classical Action with 
Lagrangian corresponding to a lattice consisting in
 $\ell$ harmonic oscillators with displacement $\mu_k$, unit mass, stiffness
$\abs{k}^2$, and a global nonlinear coupling function $\mathcal{N}(\mu)$, explicitly computable from (\ref{N}).

\section{Reconstruction of mass--spring systems from eigenvalue sequences}

Let us consider for simplicity the one dimensional case:
\begin{equation}\label{eq:one_dim_case}
\begin{cases}
 \deindesecuno{}{t}u - \deindesecuno{}{x} u  = F(u), \\
u(0,x) = u(T,x) = 0, \qquad \forall x\in [0,L],\\
u(t,0) = u(t,L) = 0, \qquad \forall t.
\end{cases}
\end{equation}

The eigensystem of $- \deindesecuno{}{x}$ is given by
\begin{align}\label{eq:eig_el_string}
   0 < \lambda_1 < \lambda_2 < \dots \quad 
    \lambda_{k}  = \tonde{\frac{\pi k}{L}}^2 \quad  \dots  \\
 \widehat u_{k}(x) = \sqrt{\frac{L}{2}}\frac{1}{\pi k}\sin\tonde{\frac{\pi k}{L} x}.
\end{align}

The above ACZ construction leads to the following finite dimensional Lagrangian (renormalized with respect to (\ref{eq:spectral_variables})):
\begin{equation}\label{eq:one_dim_lagran}
		\mathcal{L}(\mu,\dot\mu) :=
	\sum_{k=0}^N
		\tonde{
		\frac{1}{2}\abs{	\dot\mu_k}^2  - \frac{1}{2} \tonde{\frac{\pi k}{L}}^2\abs{ \mu_k}^2
		}
		+ \mathcal{N}(\mu).
\end{equation}		
 This exact finite spectral formulation does correspond to the  description of a large number of --more or less realistic-- finite dynamical systems. The direct reconnaissance of (\ref{eq:one_dim_lagran}) shows a system of $N$ unit mass harmonic oscillators, whose positions are described by the spectral variables $\mu_j$, with elastic constants
$\lambda_k=\tonde{\frac{\pi k}{L}}^2$; the term $\mathcal{N}(\mu)$ represents the nonlinear correction.

We recall that the one dimensional wave equation, i.e., equation (\ref{eq:one_dim_case}) with $F\equiv 0$, originates as the thermodynamic limit of a  finite chain of $N$ small beads connected with springs, sending $N$ to infinity while keeping mass density and elastic tension constant \cite{Lovison:2009tb}.
On the other hand, the quadratic part in (\ref{eq:one_dim_lagran}) does not correspond to a chain of springs and beads, but to a set of uncoupled harmonic oscillators.

\smallskip

\par\noindent
{\normalfont\scshape Question}\  {\it 
 Is it possible to change the variables in (\ref{eq:one_dim_lagran}),  obtaining a genuine linear spring-mass system (plus higher order terms) but maintaining the core of the spectral sequence (\ref{eq:eig_el_string}), i.e., $\lambda_k=\tonde{\frac{\pi k}{L}}^2$, $k=1,\dots,N$? 
}

\smallskip

If so, the ACZ reduction would give back a finite mechanical system with a  relevant physical meaning, because of the analogy with the elements of the thermodynamic limit sequence.

To this purpose, we consider the equation of motion of a chain of $N$ masses $m_1,\dots,m_N$  connected with springs $k_i$ and with fixed extrema. We consider also a possible nonlinear contribution $\mathsf{F}(u)$, where $u=(u_1,\dots,u_N)$ is the vector of the displacements of the masses from their rest position.
\begin{align}
	\qquad M \ddot{u} + K u   = \mathsf{F}(u), \\
	M= \quadre{\begin{array}{cccc} {m_1} & 0 & \dots & \cr 
	0 & {m_2} & 0 & \cr
	\vdots & & \ddots & 0\cr
	 & & 0 & {m_N} 
	\end{array}}, \quad &
	K= \quadre{\begin{array}{cccc} 
		k_1+k_2 & -k_2 & \dots & \cr 
		-k_2 & k_2+k_3 & -k_3 & \cr
	\vdots & & \ddots & -k_N\cr
	 & & -k_N & k_N+k_{N+1} 
	\end{array}}
	\label{eq:mass_spring_matrices}
\end{align}
We are interested in the quadratic part of the Lagrangian of this system: 
\begin{equation}\label{eq:mass_spring_lagr}
	L(u,\dot u) =  \frac{1}{2} \dot u^\top M \dot u - \frac{1}{2} u^\top K u,
\end{equation}
and we want to find a linear coordinate transformation $u=\Upsilon v$ such that (\ref{eq:mass_spring_lagr}) becomes the quadratic part of (\ref{eq:one_dim_lagran}), maintaining the same proper oscillation modes. 

If we were to take a system with equal masses and equal springs, as in the sequence of the thermodynamic limit, we would obtain the following eigenvalues:
\begin{equation}\label{eq:eq_mass_eigen}
\tilde\lambda_{k,N+1} = \frac{4(N+1)^2}{L^2} \sin^2\tonde{\frac{\pi}{2(N+1)}k}, \quad \underset{N\rightarrow \infty}{\To} \quad \tonde{\frac{\pi k}{L}}^2,
\end{equation}
which, as shown in figure \ref{fig:eigen_comparisons}, are close to the desired eigenvalues only in the first part of the sequence. Of course the $k$-th eigenvalue of the $N$-th chain, $\tilde\lambda_{k,N+1}$ will converge to $\lambda_k$ as $N$ tends to infinity, because $\frac{\pi}{2(N+1)}k\ll 1$ and the sinus in (\ref{eq:eq_mass_eigen})
can be substituted with its argument. Nevertheless, the largest eigenvalues will maintain a definite distance from the corresponding eigenvalues of the continuum. More precisely, 
\begin{equation}
	\frac{\tilde\lambda_{N,N}}{\lambda_{N}} \cong \frac{4(N+1)^2}{L^2} \scalebox{2.}{$/$} \frac{\pi^2 N^2}{L^2} \to \frac{4}{\pi^2}  (\cong 0.4052847 ).
	\label{eq:eigen_comp_disc_cont}
\end{equation}
Therefore, for matching exactly the whole sequence of eigenvalues we must consider a system with variable masses and springs. However, the reconstructed system can be symmetric, as it will be specified below.
\begin{figure}[htbp]
\begin{center}
\includegraphics[width=80mm]{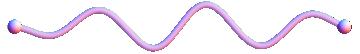}\\ \includegraphics[width=80mm]{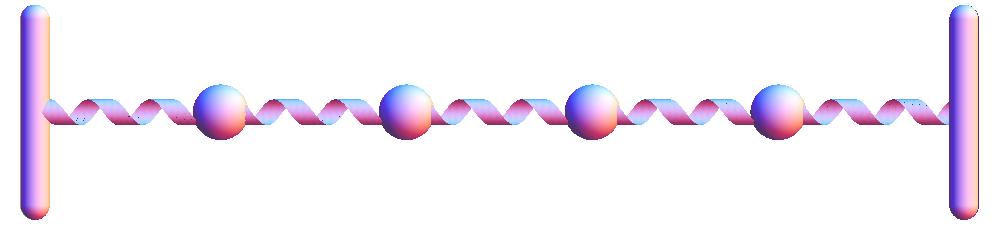} 
\caption{Equivalence among a continuum and a discrete system}
\label{fig:cont-discr-equiv}
\end{center}
\end{figure}

Let us consider the diagonal matrix $D^{-1}:= \mathrm{diag}(\frac{1}{\sqrt{m_1}},\dots,\frac{1}{\sqrt{m_N}})$. Clearly we have that $ M = D^2$, therefore, by setting 
\begin{equation}
u = D^{-1} y,
\end{equation}
we have that (\ref{eq:mass_spring_lagr}) becomes
\begin{equation}
\tilde{L}(y,\dot y) = 	L(u,\dot u) =  \frac{1}{2} \sum_{k=1}^N \dot y_k^2 - \frac{1}{2} y^\top D^{-1} K D^{-1}y.
\end{equation}
If moreover we consider the orthogonal matrix $U$ of the normalized eigenvectors of $D^{-1} K D^{-1}$, by setting 
\begin{equation}
y=Uq,
\end{equation}
the Lagrangian function (\ref{eq:mass_spring_lagr}) is represented by
\begin{equation}
\tilde{\tilde{L}}(q,\dot q) = 	L(u,\dot u) =  \frac{1}{2} \sum_{k=1}^N \dot q_k^2 - \frac{1}{2} q^\top \tilde\Lambda q,
\end{equation}
where the diagonal matrix $\tilde \Lambda := \mathrm{diag}(\tilde\lambda_1,\dots,\tilde\lambda_N)$ contains the eigenvalues of $D^{-1} K D^{-1}$. 

Therefore, the answer to our question is positive if it is possible to choose masses $m_1,\dots,m_N$ and springs $k_1,\dots,k_{N+1}$ such that the proper frequencies  $\tilde\lambda_1,\dots,\tilde\lambda_N$ coincide with the sequence $\lambda_1,\dots,\lambda_N$ in (\ref{eq:eig_el_string}). 

This problem is an \emph{inverse eigenvalue problem}, and has a unique solution in the form of a \emph{persymmetric} system, i.e., a chain of masses and springs symmetric with respect to the mid point\,\footnote{Thorough discussion on the problem of finding special springs--masses systems reproducing specified eigenvalues can be found in \cite{Boley:1987vn,Gladwell:2004yg,Gladwell:2006zr,Ji:1998ys,Nylen:1997kx,Xu:2007ly}}.
\begin{theorem}
 For any sequence of non negative distinct numbers $0\leq \lambda_1 < \lambda_2<\dots<\lambda_N$, there exists a unique persymmetric mass--spring system (\ref{eq:mass_spring_matrices}-\ref{eq:mass_spring_lagr}),  $m_1,\dots,m_N$, $k_1,\dots,k_{N+1}$, with specified total mass $m=\sum_{k=1}^N m_k$, having the sequence $\lambda_1,\dots,\lambda_N$ as the set of proper oscillation modes.
\end{theorem}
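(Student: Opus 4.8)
The plan is to reduce the statement to the classical inverse spectral theory of Jacobi (symmetric tridiagonal) matrices, and then to exploit persymmetry to compensate for the fact that a single spectrum underdetermines a general Jacobi matrix. The computation carried out just above already shows that the physical frequencies of $Ku=\lambda M u$ coincide with the eigenvalues of the symmetric tridiagonal matrix $A:=D^{-1}KD^{-1}$, whose off-diagonal entries are $-k_{i+1}/\sqrt{m_i m_{i+1}}\neq 0$; hence $A$ is an \emph{irreducible} Jacobi matrix, which automatically has $N$ distinct eigenvalues — matching, and in fact explaining the necessity of, the distinctness hypothesis. I would first record that requiring the chain to be persymmetric, i.e. $m_i=m_{N+1-i}$ and $k_i=k_{N+2-i}$, is equivalent to $A$ being \emph{bisymmetric}, $JAJ=A$, where $J$ is the reversal matrix (this uses that $D$, being built from a palindromic mass vector, is itself persymmetric). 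Thus the theorem becomes: a bisymmetric irreducible Jacobi matrix exists and is unique given its spectrum, after which the prescribed total mass fixes the one remaining scaling freedom.

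The heart of the argument is the reconstruction of the bisymmetric $A$. A soft parameter count already suggests well-posedness: a bisymmetric Jacobi matrix has exactly $N$ free entries ($\lceil N/2\rceil$ on the diagonal and $\lceil (N-1)/2\rceil$ off it), matching the $N$ prescribed eigenvalues. To upgrade this to genuine existence and uniqueness I would diagonalize the reversal symmetry: in the orthonormal basis of symmetric vectors $(e_i+e_{N+1-i})/\sqrt2$ and antisymmetric vectors $(e_i-e_{N+1-i})/\sqrt2$, the bisymmetric $A$ becomes block diagonal, $A\simeq\mathrm{diag}(A_+,A_-)$, with $A_\pm$ themselves Jacobi matrices of sizes $\lceil N/2\rceil$ and $\lfloor N/2\rfloor$. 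These blocks are the half-chain equipped respectively with a free (Neumann-type) and a fixed (Dirichlet-type) condition at the mid-point, and they share the same half-chain data, differing only in the single corner entry produced by the central coupling. The spectrum of $A$ is then the disjoint union of the spectra of $A_+$ and $A_-$, and since symmetric and antisymmetric modes of a chain symmetric about its centre strictly interlace, sorting $\lambda_1<\dots<\lambda_N$ assigns them alternately to the two blocks. This alternation recovers the two half-spectra from the single given spectrum, and strict interlacing is precisely the admissibility condition of the classical two-spectra (equivalently, spectrum-plus-weights) inverse Jacobi problem; invoking that theorem \cite{Gladwell:2004yg,Nylen:1997kx} yields a unique half-chain with positive entries, and reflecting it through $J$ reconstructs the unique bisymmetric $A$.

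It then remains to return to the physical variables. From $A$ one recovers $K=DAD$ and reads off the springs from its off-diagonal entries ($-k_{i+1}$) and diagonal entries ($k_i+k_{i+1}$); the generalized problem $Ku=\lambda Mu$ is invariant under $(M,K)\mapsto(cM,cK)$, a one-parameter freedom fixed uniquely by the constraint $\sum_k m_k=m$. I would then verify positivity: the Stieltjes/Perron positivity built into the inverse-Jacobi construction guarantees $m_i>0$ and $k_i>0$, while bisymmetry of $A$ together with persymmetry of $D$ delivers the symmetry relations $m_i=m_{N+1-i}$ and $k_i=k_{N+2-i}$. The admitted boundary value $\lambda_1=0$ corresponds merely to a semidefinite $K$ (a floating mode) and is handled identically, as it does not affect irreducibility or the interlacing.

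The main obstacle is not the parameter count but establishing that the single prescribed spectrum splits \emph{correctly}, via the strict symmetric–antisymmetric interlacing, into two half-spectra that are admissible data for the classical reconstruction and that produce \emph{positive} masses and stiffnesses. In other words, the real work is in proving that the map sending a bisymmetric irreducible Jacobi matrix to its ordered spectrum is a bijection onto strictly increasing $N$-tuples — verifying the strict interlacing of the two symmetry sectors and the positivity of the reconstructed entries is where the delicacy resides, everything else being bookkeeping.
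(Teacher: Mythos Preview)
Your proposal is correct and, in fact, considerably more detailed than the paper's own proof. The paper does not carry out any of the symmetry splitting or interlacing analysis you outline: it simply observes that $D^{-1}KD^{-1}$ is a Jacobi matrix and then invokes, as a black box, Theorem~4.3.2 of Gladwell's monograph \cite{Gladwell:2004yg}, which asserts directly that a persymmetric Jacobi matrix is determined (up to a multiplicative constant) by its spectrum; the passage back to physical masses and springs with prescribed total mass is then delegated to \cite[\S\S4.4--4.6]{Gladwell:2004yg}.

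What you have written is essentially a sketch of \emph{how Gladwell's theorem itself is proved}: the bisymmetric Jacobi matrix is block-diagonalised by the reversal symmetry into two half-chains $A_\pm$ differing only in a corner entry, the strict interlacing of their spectra partitions the given eigenvalues alternately between the two blocks, and the classical two-spectra inverse problem reconstructs the half-chain with positive entries. So your route is not so much different as it is one level deeper --- you open the box the paper leaves closed. This buys you a self-contained argument and a clear identification of where the real analytic content lies (the interlacing and the positivity of the reconstructed entries), at the cost of having to actually establish those facts; the paper's approach is shorter but relies entirely on the cited reference. Your remark that $\lambda_1=0$ corresponds to a semidefinite $K$ (a free--free chain with $k_1=k_{N+1}=0$) is the correct reading of the boundary case and is consistent with the paper's formulation.
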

\begin{proof}
 The lines of the proof follows \cite[Chapters 3 and 4]{Gladwell:2004yg}. 
 A \emph{Jacobi matrix} is a positive semi-definite symmetric tridiagonal
matrix with (strictly) negative codiagonal entries. 
If the elastic constants $k_1,\dots,k_{N+1}$ are all positive, the stiffness matrix $K$ in (\ref{eq:mass_spring_lagr}) is a Jacobi matrix. Moreover, if also the masses $m_1,\dots,m_N$ are all positive, the matrix $D^{-1}K D^{-1}$ is still a Jacobi matrix. The proof is completed by invoking 
\begin{quotation}
\noindent\cite[Theorem 4.3.2]{Gladwell:2004yg}: \itshape There is a {\rm unique}, up to multiplicative constants, persymmetric Jacobi matrix $J$ for any assigned spectrum $\sigma(J) =
(\lambda_j)_{j=1}^N$, satisfying $0\leq \lambda_1 < \lambda_2< \dots < \lambda_N $.
\end{quotation}
As described in details in \cite[Section 4.4-4.6]{Gladwell:2004yg} it is possible to reconstruct uniquely a spring-mass system with specified total mass from such a persymmetric Jacobi matrix $J$. 
\end{proof}
For instance, if we consider the first five eigenvalues of an elastic string and reconstruct an equivalent persymmetric elastic chain with five beads and total mass 1, we obtain the mechanical system:
\begin{eqnarray}
 \text{\bf masses:} \quad & \left\{\frac{5}{21}, \frac{5}{28}, \frac{1}{6}, \frac{5}{28},  \frac{5}{21} \right\},
 \label{eq:rec_masses}\\
\text{\bf springs:} & \frac{\pi^2}{L^2}\left\{\frac{25}{42},\frac{15}{14},\frac{5}{4},\frac{5}{4},\frac{15}{14},\frac{25}{42}\right\}.\label{eq:rec_springs}
\end{eqnarray}
The corresponding normalized eigenvectors of the discrete \emph{simulacrum} are shown in figure \ref{fig:disc_simulacrum}(b), along with the first five eigenvectors of the continuous system for comparison (c). The eigenvalues of a chain with equal masses and springs are compared with the eigenvalues of the string in figure \ref{fig:eigen_comparisons}, for a system with 5 masses and 25 masses. Being the largest eigenvalue of the finite chain less than half the corresponding eigenvalue of the string (\ref{eq:eigen_comp_disc_cont}), the reconstructed system will not tend to a homogeneous system but a definite difference among masses and springs is maintained as the system size grows.

\begin{figure}[htbp]
\begin{center}
 \includegraphics[width=61mm]{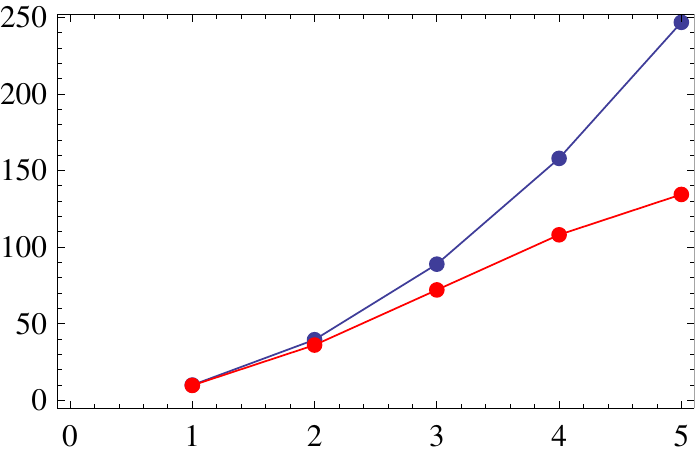} \includegraphics[width=61mm]{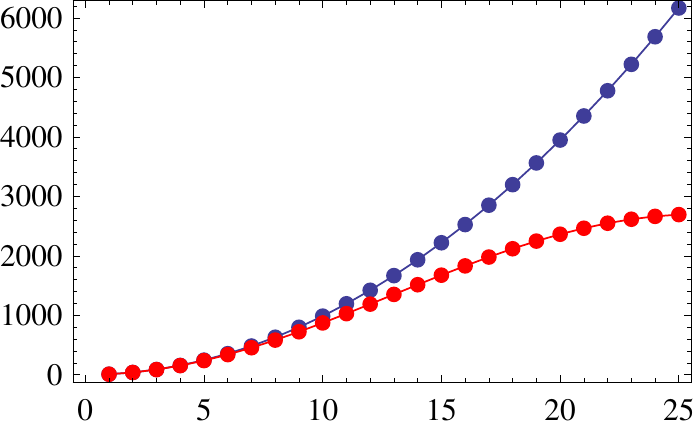} 
\caption{Right panel: first five eigenvalues of the elastic string (blue) and eigenvalues of the elastic chain with five equal masses and springs (red). Left panel: first 25 eigenvalues of the elastic string (blue) and eigenvalues of the elastic chain with 25 equal masses and springs (red).
}
\label{fig:eigen_comparisons}
\end{center}
\end{figure}

\begin{figure}[htbp]
\begin{center}
\parbox[b]{6mm}{(a)\\\rule{0mm}{20mm}}\includegraphics[width=55mm]{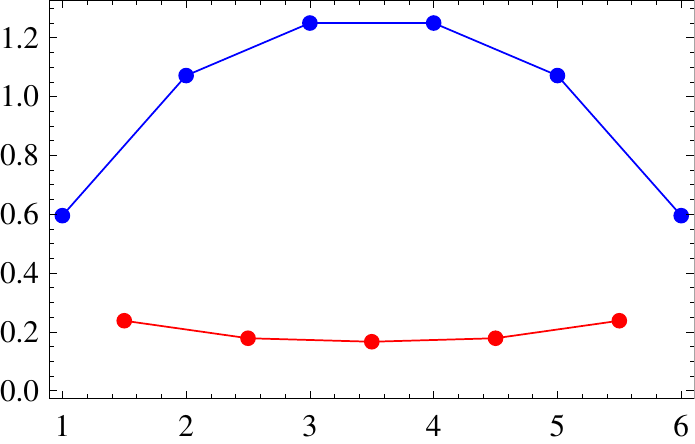} \includegraphics[width=53mm]{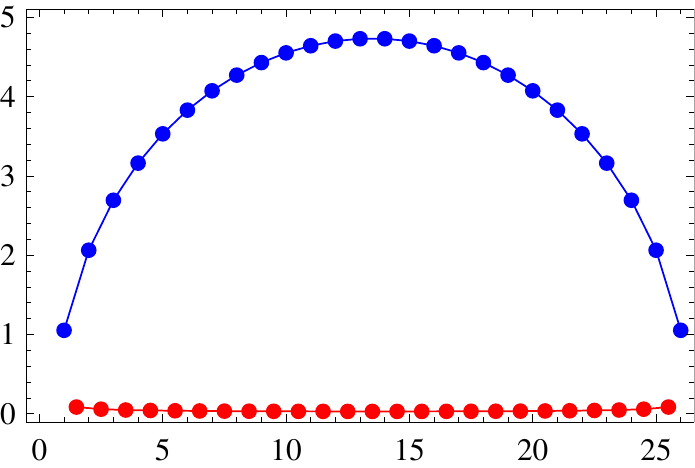} \\
\parbox[b]{16mm}{(b)\\\rule{0mm}{33mm}}\includegraphics[width=110mm]{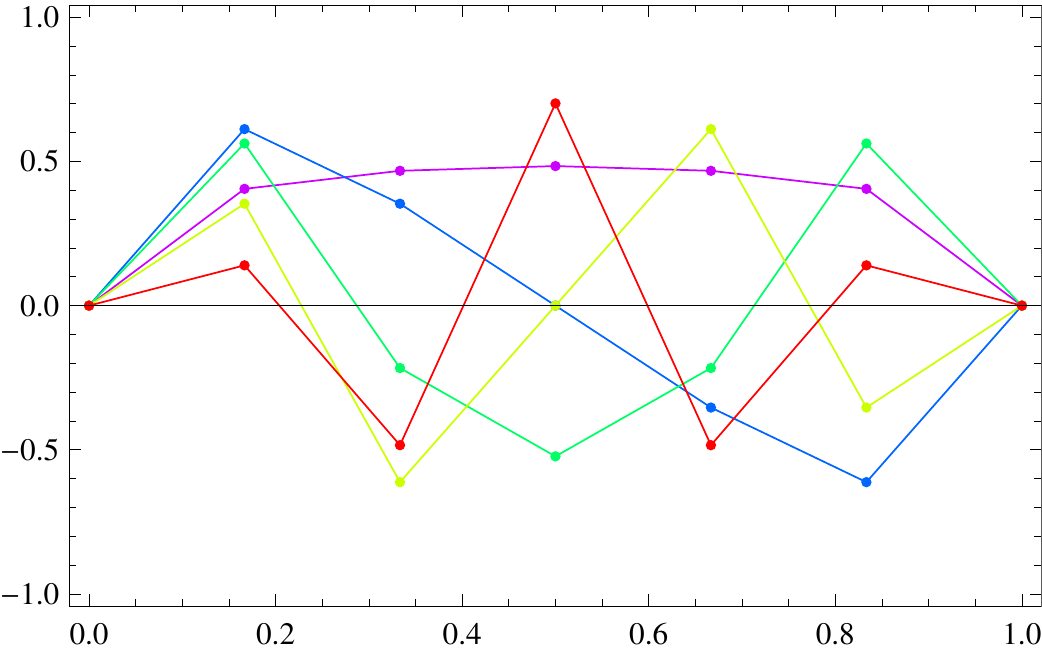}\\
\parbox[b]{16mm}{(c)\\\rule{0mm}{33mm}}\includegraphics[width=110mm]{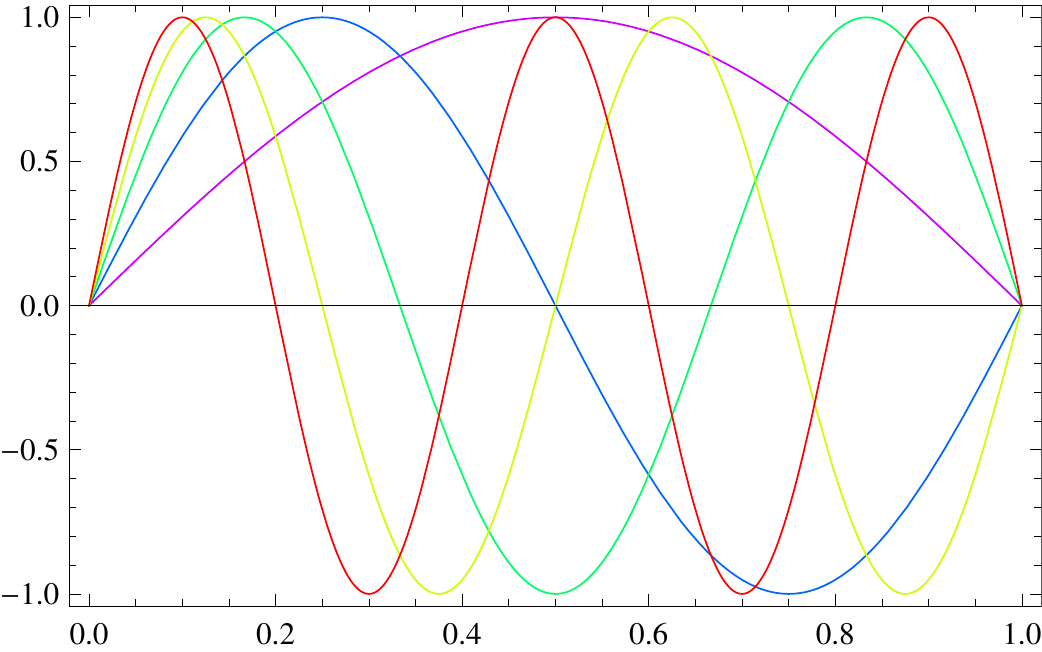}
\caption{Five-beads elastic chain with the eigenvalues equal to the first five eigenvalues of the elastic string. Panel (a): springs (blue) and masses (red) of the reconstructed system (\ref{eq:rec_masses}-\ref{eq:rec_springs}) with five elements (left) and 25 elements (right). 
Panel (b): normalized eigenvectors for the chain with 5 elements. Panel (c): first five eigenvectors for the elastic string. 
}
\label{fig:disc_simulacrum}
\end{center}
\end{figure}

\section{Conclusions}

The above exploration around the meaning of the exact finite reduction of the PDE setting of a static or dynamic continuous system, here sketched by the problems (\ref{eq:DirichletProb}) and (\ref{eq:WaveEq}) respectively, has led us to the following interpretation: such exact reduced system is precisely a discrete  `simulacrum' of the original system, reminiscent of all its main features, and it can be perfectly interpreted as a genuine physical system, strictly analogous to the systems of the hierarchy employed in the thermodynamic limit, unless that the masses and springs are not all equal, but persymmetric, i.e., symmetric with respect to the middle point of the system.


\bibliographystyle{unsrt} 		

\bibliography{Esistenza}

\end{document}